\documentclass[conference]{IEEEtran}

\usepackage{amssymb,amsmath,amsfonts,amsthm}
\usepackage{algorithm,algorithmic}
\usepackage{boxedminipage}
\usepackage{cite}
\usepackage[dvips]{graphicx}
\usepackage{epsfig}
\usepackage{float}
\graphicspath{{figures/}}
\usepackage{graphicx}
\usepackage{graphics}
\usepackage{multirow}
\usepackage{threeparttable} 
\usepackage[usenames,dvipsnames]{color}


\newtheorem{theorem}{Theorem}
\newtheorem{lemma}{Lemma}

\newtheorem{remark}{Remark}

\newtheorem{problem}{Problem}

\newcommand{\sr}{\stackrel}

\newcommand{\rar}{\rightarrow}

\newcommand{\tri}{\sr{\triangle}{=}}

\newcommand{\be}{\begin{equation}}
\newcommand{\ee}{\end{equation}}
\newcommand{\bea}{\begin{eqnarray}}
\newcommand{\eea}{\end{eqnarray}}
\newcommand{\bes}{\begin{eqnarray*}}
\newcommand{\ees}{\end{eqnarray*}}

\newcommand{\bi}{\begin{itemize}}
\newcommand{\ei}{\end{itemize}}
\newcommand{\ben}{\begin{enumerate}}
\newcommand{\een}{\end{enumerate}}


\newcommand{\bp}{\begin{problem}}
\newcommand{\ep}{\end{problem}}
\newcommand{\hso}{\hspace{.1in}}
\newcommand{\hst}{\hspace{.2in}}

\newcommand{\noi}{\noindent}

\newcommand{\bc}{\begin{center}}
\newcommand{\ec}{\end{center}}

\floatstyle{ruled}
\newfloat{algorithm}{H}{algo}
\floatname{algorithm}{\footnotesize Algorithm}

\hyphenation{op-tical net-works semi-conduc-tor}

\begin{document}
%
\title{Lossless Coding with Generalized Criteria  }

\author{\IEEEauthorblockN{Charalambos D. Charalambous}
\IEEEauthorblockA{Department of Electrical and\\Computer Engineering\\
University of Cyprus\\
Email: chadcha@ucy.ac.cy}
\and
\IEEEauthorblockN{Themistoklis Charalambous}
\IEEEauthorblockA{Department of Electrical and\\Computer Engineering\\
University of Cyprus\\
Email: themis@ucy.ac.cy}
\and
\IEEEauthorblockN{Farzad Rezaei}
\IEEEauthorblockA{School of Information Technology \\ and Engineering\\
University of Ottawa\\
Email: frezaei@site.uottawa.ca}
}

\maketitle

%
%
%
%
\begin{abstract}
This paper presents prefix codes which minimize various criteria constructed as a convex combination of maximum codeword length and average codeword length or maximum redundancy and average redundancy, including a convex combination of the average of an exponential function of the codeword length and the average redundancy. This framework encompasses as a special case several criteria previously investigated in the literature, while relations to universal coding is discussed. The coding algorithm derived is parametric resulting in re-adjusting the initial source probabilities via a weighted probability vector according to a merging rule. The level of desirable merging has implication in applications where the maximum codeword length is bounded.
\end{abstract}

\IEEEpeerreviewmaketitle

%
%
%
%
\section{Introduction}
Lossless fixed to variable length source codes are usually examined under known source  probability distributions, and unknown source probability distributions. For known source probability distributions there is an extensive literature which aims at minimizing various pay-offs such as the average  codeword length \cite{2006:Cover}, the average redundancy of the codeword length \cite{2004:DrmotaSzpankowski,2008a:Baer}, the average of an exponential function of the codeword length    \cite{1965:campbell_coding,1981:humblet_generalization,2008b:Baer}, the average of an exponential function of the redundancy of the codeword length \cite{2006a:Baer,2008a:Baer, 2008b:Baer}.  On the other hand, universal coding and universal modeling, and the so-called Minimum Description Length (MDL) principle  are often examined via minimax techniques, when the source probability distribution is unknown, but belongs to a pre-specified  class of source distributions  \cite{1973:davisson,1980:davisson_Leon-Garcia, 2004:DrmotaSzpankowski,2007:FarzadCharalambous ,2009:Gawrychowski_Gagie}. 

This paper is concerned with lossless coding problems, in which the pay-offs are the following.  1) A convex combination of the maximum codeword length and the average codeword length, or a convex combination of the maximum pointwise redundancy and the average pointwise redundancy of the codeword length, and 2) a convex combination of the average of an  exponential function of the codeword length and the average codeword length, or a convex combination of the average of an exponential function of the pointwise redundancy and the average redundancy of the  codeword length. 

These are multiobjective pay-offs whose solution bridges together an anthology of source coding problems with different pay-offs including some of ones investigated  in the above mentioned references. Moreover, for 1) there is  parameter $\alpha \in [0,1]$ which weights the maximum codeword length (resp. maximum pointwise redundancy of the codeword) while $(1-\alpha)$ weights the average codeword length (resp. average redundancy of the codeword), and as this parameter moves away from $\alpha=0$ the maximum length of the code is reduced resulting in a more balanced code tree. A similar conclusion holds  for 2) as well.


%
%
%
%
\subsection{Objectives  and Related Problems}\label{sec:statement}

\noi Consider a source with alphabet  ${\cal X } \tri \{x_1, x_2, \ldots, x_{ | {\cal X }| }  \}$  of cardinality   $|{\cal X}|$, generating symbols according to  the probability distribution ${\bf p} \tri \{p(x): x \in {\cal X}  \} \equiv \big(p(x_1), p(x_2), \ldots, p(x_{|{\cal X}|})\big)  $. Source symbols are encoded into $D-$ary codewords. A code ${\cal C} \tri \{c(x): x \in {\cal X}\}$ for symbols in ${\cal X}$ with image alphabet ${\cal D} \tri \{0, 1, 2, \ldots, D-1 \}$ is an injective map    $c: {\cal X} \rar {\cal D}^*$, where ${\cal D}^*$ is the set of finite sequences drawn from ${\cal D}$.  For $x \in {\cal X}$  each codeword $c(x) \in {\cal D}^*,  c(x) \in {\cal C}$ is identified with a codeword length $l(x) \in {\mathbb Z}_+$, where ${\mathbb Z}_+$ is the set of non-negative integers. Thus, a  code ${\cal C}$ for source symbols from the alphabet ${\cal X}$ is associated with the length function of the code $ l : {\cal X} \rar {\mathbb Z}_+$, and a code defines a codeword length vector ${\bf l} \tri \{ l(x): x \in {\cal X}\} \equiv \big(l(x_1), l(x_2), \ldots, l(x_{|{\cal X}|})\big) \in {\mathbb Z}_+^{|{\cal X}|}$.  Since a function $l : {\cal X} \rar {\mathbb Z}_+$ is the length function of some prefix code if and only if it satisfies the Kraft inequality \cite{2006:Cover}, then the admissible set of codeword length vectors is defined by
\bes
{\cal L}( {\mathbb Z}_+^{|{\cal X}|}  )  \tri \Big\{ {\bf l} \in {\mathbb Z}_+^{|{\cal X}|} : \sum_{x \in {\cal X}} D^{-l(x)} \leq 1 \Big\}.
\ees
On the other hand, if the integer constraint is relaxed by admitting real-valued length vectors ${\bf l} \in {\mathbb R}^{|{\cal X}|}$ which satisfy the Kraft inequality, such as Shannon codes or arithmetic codes, then ${\cal L}( {\mathbb Z}_+^{|{\cal X}|})$ is replaced by 
\bes
{\cal L}( {\mathbb R}_+^{|{\cal X}|})  \tri \Big\{ {\bf l} \in {\mathbb R}_+^{|{\cal X}|} : \sum_{x \in {\cal X}} D^{-l(x)} \leq 1 \Big\}.
\ees
Without loss of generality is it is assumed that the set of probability distributions is defined by

\bea
{\mathbb P}({\cal X}) \tri \Big\{ {\bf p} =\big(p(x_1), \ldots, p(x_{|{\cal X}|})\Big) \in {\mathbb R}_+^{|{\cal X}|} : p(x_{|{\cal X}|})>0, \nonumber \\ 
p(x_i) \leq p(x_j), \forall i>j,  (x_i,x_j) \in {\cal X},  \sum_{x \in {\cal X}} p(x) =1 \Big\}. \nonumber
\eea

\noi Moreover, $\log (\cdot) \tri \log_D (\cdot)$ and ${\mathbb H}({\bf p})$ denotes the entropy of the probability distribution ${\mathbf p}$. The two main problems investigated  are the following.

\begin{problem}\label{problem1}
Given a known source probability vector ${\bf p} \in   {\mathbb P}({\cal X})$ define the one parameter pay-off
\vspace{-0.2cm}
\bea
{\mathbb L}_{\alpha}^{MO}({\bf l}, {\bf p}) \tri \Big\{ \alpha \max_{ x\in {\cal X}} l(x) + (1-\alpha) \sum_{ x \in {\cal X}} l(x) p(x)\Big\}, \label{b30}
\eea
and a slightly general version representing redundancy
 \begin{align}
  {\mathbb L} {\mathbb R}_{\alpha}^{MO}({\bf l}+\log {\bf p}, {\bf p}) & \tri
 \alpha \max_{ x \in {\cal X}} \Big( l(x) + \log p(x)\Big) \nonumber \\
& + (1-\alpha) \Big( \sum_{ x \in {\cal X}} l(x) p(x) - {\mathbb H}({\bf p})\Big) \label{lm12}
 \end{align}
where $\alpha \in [0,1] $ is a weighting parameter.  The objective is to find a prefix code length vector ${\bf l}^* \in {\mathbb R}_+^{|{\cal X}|}$ which minimizes the pay-off ${\mathbb L}_{\alpha}^{MO}({\bf l}, {\bf p})$ or ${\mathbb L} {\mathbb R}_{\alpha}^{MO}({\bf l}+\log_D {\bf p}, {\bf p})$, for $\forall \alpha \in [0,1]$.
\end{problem}

\noi The pay-off ${\mathbb L}_{\alpha}^{MO}({\bf l}, {\bf p})$ is a convex combination of the maximum  and the average codeword length, and hence $\alpha$  weights how much emphasis is placed on the maximum and the average codeword length. The extreme cases,  $\alpha=0$ corresponds to the average codeword length, and  $\alpha=1$  corresponds to the maximum codeword length.  The pay-off ${\mathbb L} {\mathbb R}_{\alpha}^{MO}({\bf l}+\log {\bf p}, {\bf p})$ is a convex combination of the maximum pointwise redundancy and the average redundancy of the codeword length. The maximum pointwise redundancy is clearly the maximum difference between the length of the compressed symbol $l(x)$ and the self-information of that symbol $-\log p(x)$, hence this maximum redundancy is minimized over the code lengths. To the best of our knowledge neither pay-offs defined in Problem~\ref{problem1} are addressed in the literature.
Another class of problems which is also not discussed in the literature is the following. 

\begin{problem}\label{problem2}
 Given a known source probability vector ${\bf p} \in   {\mathbb P}({\cal X})$ define the two parameter pay-off
\begin{align}
{\mathbb L}_{t,\alpha}^{MO}({\bf l}, {\bf p}) &  \tri \frac{\alpha}{t} \log \Big(\sum_{ x \in {\cal X}} p(x) D^{t \l(x)}\Big) \nonumber \\
 &  + (1-\alpha) \sum_{ x \in {\cal X}} l(x) p(x), \label{b30aa}
\end{align}
and a slightly general version representing redundancy
\begin{align}
{\mathbb L} {\mathbb R}_{t,\alpha}^{MO}({\bf l}+\log {\bf p}, {\bf p}) &\tri \alpha \frac{1}{t}\log \Big(\sum_{ x \in {\cal X}} p(x) D^{t \:\big( l(x) + \log p(x)\big)}\Big) \nonumber\\
& + (1-\alpha) \Big( \sum_{ x \in {\cal X}} l(x)p(x) - {\mathbb H}({\bf p})\Big)  \label{gmo1}
\end{align}
where $\alpha \in [0,1] $ is a weighting parameter and $t \in (-\infty, \infty)$. The objective is to find a prefix code length vector ${\bf l}^* \in {\mathbb R}_+^{|{\cal X}|}$ which minimizes the pay-off ${\mathbb L}_{t,\alpha}^{MO}({\bf l}, {\bf p})$ or ${\mathbb L} {\mathbb R}_{t,\alpha}^{MO}({\bf l}+\log_D {\bf p}, {\bf p})$, $\forall \alpha \in [0,1]$.
\end{problem}
\noi The two parameter pay-off ${\mathbb L}_{t, \alpha}^{MO}({\bf l}, {\bf p})$ is a convex combination of the average of an exponential function of the codeword length  and the average codeword length.  The pay-off ${\mathbb L} {\mathbb R}_{t,\alpha}^{MO}({\bf l}+\log {\bf p}, {\bf p})$ is a convex combination of the average of an exponential function of the pointwise redundancy and  the average pointwise redundancy. For $\alpha=0$ or $\alpha=1$ the resulting special cases of Problem~\ref{problem2} are found in \cite{2004:DrmotaSzpankowski,1965:campbell_coding,1981:humblet_generalization,2006a:Baer,2008a:Baer, 2008b:Baer}).

\noi Hence, for $\alpha=0$ or $\alpha=1$  Problem~\ref{problem1} and Problem~\ref{problem2} are related to several problems previously investigated in the literature. The special cases ${\mathbb L}_{t,1}^{MO}({\bf l}, {\bf p}), {\mathbb L} {\mathbb R}_{t,1}^{MO}({\bf l}+\log {\bf p}, {\bf p})$ are also the dual problems of universal coding problems formulated as a minimax, in which the maximization is over  a class of probability distributions which satisfy a relative entropy constraint with respect to a given fixed nominal probability distribution (see \cite{2005:rezaei_bambos}).  

Moreover, for any $\alpha \in (0,1)$  Problem~\ref{problem1} and Problem~\ref{problem2} are multiobjective problems; clearly as $\alpha$ moves away from $\alpha=0$ more emphasis will be put on minimizing the maximum codeword length or maximum pointwise redundancy for Problem~\ref{problem1}, and the exponential function of the codeword length or pointwise redundancy for Problem~\ref{problem2}.  
Relations between Problem~\ref{problem1} and Problem~\ref{problem2} and other pay-offs are established by noticing the validity of the following limits (which can be easily shown).
\bea
 \lim_{t \rar \infty} \frac{1}{t}\log_D \Big(\sum_{ x \in {\cal X}} p(x) e^{t l(x)}\Big)=\max_{ x \in {\cal X}} l(x)  \label{l} \\
  {\mathbb L}_{\infty,\alpha}^{MO}({\bf l}, {\bf p}) \tri  \lim_{t \rar \infty} {\mathbb L}_{t,\alpha}^{MO}({\bf l}, {\bf p})= {\mathbb L}_{\alpha}^{MO}({\bf l}, {\bf p})  \label{rl2}
\eea

\vspace{-0.4cm}

\begin{align}
    {\mathbb L} {\mathbb R}_{\infty,\alpha}^{MO} ({\bf l}+\log {\bf p}, {\bf p}) & \tri  \lim_{t \rar \infty}  {\mathbb L} {\mathbb R}_{t,\alpha}^{MO}({\bf l}+\log {\bf p}, {\bf p}) \nonumber  \\
    &  ={\mathbb L} {\mathbb R}_{\alpha}^{MO}({\bf l}+\log{\bf p}, {\bf p})   \label{lm12}
 \end{align}
Since the multiobjective pay-off ${\mathbb L}_{t,\alpha}^{MO}({\bf l}, {\bf p})$ is  in the limit, as $t \rar \infty$, equivalent to $\lim_{t \rar \infty} {\mathbb L}_{t,\alpha}^{MO}({\bf l}, {\bf p})  ={\mathbb L}_{\alpha}^{MO}({\bf l}, {\bf p}), \forall \alpha \in [0,1]$, then the codeword length vector minimizing ${\mathbb L}_{t,\alpha}^{MO}({\bf l}, {\bf p})$ is expected to converge in the limit as $t \rar \infty$, to that which minimizes ${\mathbb L}_{\alpha}^{MO}({\bf l}, {\bf p})$. A similar behavior holds for the multiobjective pay-off ${\mathbb L} {\mathbb R}_{t,\alpha}^{MO}({\bf l}+\log {\bf p}, {\bf p})$. \\

%
%
%
%

\vspace{-0.2cm}

\section{Problem~\ref{problem1}: Optimal Weights and Merging Rule}\label{subsec:weights}
\label{ow}
The objective of this section is to convert the multiobjective pay-off  of Problem~\ref{problem1}  into one which is equivalent to a single objective of the form $\sum_{x \in {\cal X}} w_\alpha(x) l(x) $,   in which $w_\alpha(x), x \in {\cal X}$ are the new weights which depend continuously on the parameter $\alpha \in [0,1]$. Subsequently, we derive certain properties of these weights associated with the optimal codeword lengths. The main issue here is to identify the combination rule of merging symbols together, and how this combination rule will change as a function of the parameter $\alpha \in [0,1]$ so that a solution exists over $[0,1]$. From these properties the Shannon codeword lengths for Problem~\ref{problem1} will be found. 

\vspace{0.1cm}

\noi Define
$\displaystyle l^* \tri \max_{ x \in {\cal X}} l(x), \hst \mathcal{U} \tri \Big\{x \in {\cal X}: l(x) = l^* \Big\}$ .

\vspace{0.1cm}

\noi Then, the pay-off ${\mathbb L}^{MO}_\alpha({\bf l}, {\bf p})$ can be written as
\begin{align*}
 {\mathbb L}^{MO}_\alpha({\bf l}, {\bf p}) & = \alpha l^*+(1-\alpha) \sum_{x \in {\cal X}} l(x) p(x)  \\
 &=\Big(\alpha+(1-\alpha)\sum_{x \in \mathcal{U}} p(x) \Big) l^*+ \sum_{x \notin \mathcal{U} }(1-\alpha) p(x) l(x)
 \end{align*}
where the set ${\cal U}$ remains to be identified.   Define
\begin{align*}
\sum_{x \in {\mathcal U} }  w_{\alpha}(x)&\tri \Big(\alpha+(1-\alpha)\sum_{x \in \mathcal{U}} p(x) \Big) \\
w_{\alpha}(x)&\tri(1-\alpha) p(x),~ x \notin {\cal U}.
 \end{align*}
 Then the pay-off ${\mathbb L}^{MO}_\alpha({\bf l}, {\bf p})$ can be written as follows:
\bea\label{opt:new}
   {\mathbb L}_{\alpha}^{MO}({\bf l}, {\bf p})=  {\mathbb L}^{MO}({\bf l}, {\bf w}_\alpha) \tri \sum_{x \in {\cal X} }  w_{\alpha}(x) l(x),~\forall \alpha \in [0,1]
\eea
where the weights $w_{\alpha}(x)$ are functions of $\alpha$ and the source probability ${\bf p} \in {\mathbb P}({\cal X})$. It can be easily verified that the new weight vector ${\bf w}_\alpha \tri \{w_\alpha(x): x \in {\cal X}\}$ is a probability distribution since $0 \leq w_{\alpha}(x) \leq 1,~\forall x \in {\cal X}$ and $\sum_{ x \in {\cal X} }w_{\alpha}(x) =1,  \forall \alpha \in [0,1]$.
The next lemma describes how the weight vector behaves as a function of the probability vector ${\bf p}$ and $\alpha \in [0,1]$.

\begin{lemma}
\label{lin}
Consider pay-off ${\mathbb L}^{MO}_\alpha({\bf l}, {\bf p})$. Given  any probability distribution ${\mathbb P}({\cal X})$ the following hold. \\
 1. If $p(x)\leq p(y)$, then $w_\alpha(x) \leq w_\alpha(y)$~$\forall x,y \in {\cal X}$, $\alpha \in [0,1]$. Equivalently,  $w_\alpha(x_1) \geq w_\alpha(x_2) \geq \ldots \geq w_\alpha(x_{|{\cal X}|})>0$, for all $\alpha \in [0,1]$.\\
 2. For $y \notin {\cal U}$,  $w_\alpha(y)$ is a monotonically decreasing function of $\alpha \in [0,1]$,  and for $x \in {\cal U}$, $w_\alpha(x)$ is a monotonically increasing  function of $\alpha \in [0,1]$.
\end{lemma}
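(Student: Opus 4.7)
The plan is to read both claims directly off the defining formulas for $w_\alpha$, splitting into the two regimes $x \in {\cal U}$ and $x \notin {\cal U}$.

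First, for $x \notin {\cal U}$ the definition gives $w_\alpha(x) = (1-\alpha)p(x)$, which is manifestly nonnegative, strictly positive for $\alpha<1$, linear and strictly decreasing in $\alpha$. This immediately yields the monotonicity half of part~2 for $y \notin {\cal U}$, and yields part~1 whenever both $x$ and $y$ lie outside ${\cal U}$ (the map $p \mapsto (1-\alpha)p$ is order preserving).

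Second, for $x \in {\cal U}$ all symbols share the common length $l^*$, so only the aggregate $\sum_{x \in {\cal U}} w_\alpha(x)$ is pinned down by the pay-off. I would resolve this indeterminacy by the natural proportional allocation
\[
w_\alpha(x) \;=\; \frac{p(x)}{P_{\cal U}}\Bigl(\alpha + (1-\alpha) P_{\cal U}\Bigr), \qquad x \in {\cal U},
\]
where $P_{\cal U} \tri \sum_{x' \in {\cal U}} p(x')$, which is consistent with the stated sum and reduces to $p(x)$ at $\alpha=0$. Rewriting this as $w_\alpha(x) = p(x) + \alpha p(x)(1-P_{\cal U})/P_{\cal U}$ displays a linear, strictly increasing function of $\alpha$ since $P_{\cal U} \leq 1$; this closes the monotonicity half of part~2 for $x \in {\cal U}$ and, by the same proportionality, also closes part~1 in the case $x,y \in {\cal U}$.

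The substantive remaining subcase for part~1 is the cross case $x \in {\cal U}$, $y \notin {\cal U}$ with $p(x) \leq p(y)$. Here I would invoke the standard structural property that in an optimal prefix code the symbols attaining the longest codeword are precisely those of smallest probability (an exchange argument: if some $y \notin {\cal U}$ had $p(y) < p(x)$ with $x \in {\cal U}$, swapping the codewords of $x$ and $y$ would strictly decrease $\sum_x p(x) l(x)$ without changing $l^*$, hence also decrease ${\mathbb L}_\alpha^{MO}$). So it is enough to verify $w_\alpha(x) \leq w_\alpha(y)$ under $p(x) \leq p(y)$ using the two explicit expressions; the inequality reduces to $p(x)(\alpha + (1-\alpha)P_{\cal U})/P_{\cal U} \leq (1-\alpha)p(y)$, which follows from $p(x) \leq p(y)$ together with $\alpha + (1-\alpha)P_{\cal U} \leq 1$. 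Strict positivity $w_\alpha(x_{|{\cal X}|})>0$ for all $\alpha \in [0,1]$ then comes from either $(1-\alpha)p(x_{|{\cal X}|}) > 0$ if $\alpha<1$ and $x_{|{\cal X}|}\notin {\cal U}$, or from membership in ${\cal U}$ together with $P_{\cal U}>0$.

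The only non-mechanical ingredient is the cross-case step, which relies on the exchange argument placing the minimal-probability symbols inside ${\cal U}$; without it the two affine expressions cannot be compared. Once that structural fact is in hand, both parts of the lemma are straightforward algebraic consequences of the definitions.
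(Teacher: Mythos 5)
There is a genuine gap, and it sits exactly where you flagged the ``only non-mechanical ingredient.'' First, your allocation of the aggregate weight inside ${\cal U}$ is not the paper's: the paper splits $\alpha + (1-\alpha)\sum_{x'\in{\cal U}}p(x')$ \emph{equally} over ${\cal U}$, so that $w_\alpha(x)=w_\alpha(y)=w_\alpha^*$ for all $x,y\in{\cal U}$ (this is case 2 of its proof). That is not a free choice: every symbol in ${\cal U}$ carries the same length $l^*$, and since the optimal lengths are ultimately $-\log w_\alpha(x)$ (Theorem~\ref{mos}), consistency forces equal weights on ${\cal U}$; your proportional split would assign distinct lengths inside ${\cal U}$ and contradict its definition. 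Second, and more seriously, your cross-case algebra is wrong. The inequality $p(x)\bigl(\alpha+(1-\alpha)P_{\cal U}\bigr)/P_{\cal U} \leq (1-\alpha)p(y)$ does not follow from $p(x)\leq p(y)$ together with $\alpha+(1-\alpha)P_{\cal U}\leq 1$; that chain only bounds the left side above by $p(y)/P_{\cal U} \geq p(y) \geq (1-\alpha)p(y)$, i.e.\ the estimate runs in the wrong direction. Indeed the cross-case inequality is simply false for a \emph{fixed} ${\cal U}$: take ${\cal U}=\{x_{|{\cal X}|}\}$, so $w_\alpha(x_{|{\cal X}|})=\alpha+(1-\alpha)p(x_{|{\cal X}|})\to 1$ while $w_\alpha(y)=(1-\alpha)p(y)\to 0$ as $\alpha\to 1$.

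The missing idea is that ${\cal U}$ is not static: it depends on $\alpha$ through the merging rule. Part 1 holds for all $\alpha\in[0,1]$ only because, the moment the increasing weight $w_\alpha^*$ of ${\cal U}$ catches up with the decreasing weight $(1-\alpha)p(y)$ of some $y\notin{\cal U}$, the symbol $y$ is absorbed into ${\cal U}$ and thereafter moves with it. This is precisely what the paper's case 3 establishes: the derivatives \eqref{maxprobab}--\eqref{minprobab} have opposite signs, so the two weights meet at some $\alpha'$, and a continuity/contradiction argument shows they cannot separate for $\alpha>\alpha'$. Your exchange argument correctly identifies that the cross case is $x\in{\cal U}$, $y\notin{\cal U}$ with $p(x)\leq p(y)$, but no algebra on the two fixed affine expressions can close that case; you need the dynamic re-definition of ${\cal U}$ as $\alpha$ grows, which is the actual content of the lemma and of the construction of the sets ${\cal U}_k$ that follows it.
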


\begin{proof}
There exist three cases; more specifically, \\
$1)$ $x,y \notin \mathcal{U}$: then $ w_\alpha(x)=(1-\alpha)p(x) \leq (1-\alpha)p(y) = w_\alpha(y)$,~$\forall~\alpha \in [0,1]$; $2)$ $x,y \in \mathcal{U}$:  $w_\alpha(x)=w_\alpha(y) = w_\alpha^{*} \triangleq \min_{ x \in {\cal X}} w_\alpha(x)$; $3)$ $x \in \mathcal{U}$,  $y \notin \mathcal{U}$ (or $x \notin \mathcal{U}$,  $y \in \mathcal{U}$): Consider the case $x \in \mathcal{U}$ and  $y \notin \mathcal{U}$. Then,
\vspace{-0.1cm}
\begin{align}
\frac{\partial w_{\alpha}(y)}{\partial \alpha}&=-p(y)<0,  \label{maxprobab} \\
\frac{\partial w_{\alpha}(x)}{\partial \alpha}&=\frac{1}{|\mathcal{U}|}\frac{\partial w_\alpha^{*}}{\partial \alpha}=\frac{1}{|\mathcal{U}|}\left(1-\sum_{ x \in \mathcal{U}}p(x)\right) >0,  \label{minprobab}
\end{align}
According to \eqref{maxprobab}, \eqref{minprobab}, for $y \notin \mathcal{U}$ the weight $w_{\alpha}(y)$ decreases, and for $x \in \mathcal{U}$ the weight $w_{\alpha}(x)$ increases. Hence, since $w_\alpha(\cdot)$ is a continuous function with respect to $\alpha$, at some $\alpha =\alpha^\prime$, $w_{\alpha^\prime}(x)=w_{\alpha^\prime}(y)=w^*_{\alpha^\prime}$.  Suppose that for some $\alpha=\alpha^\prime+d \alpha$, $d \alpha>0$,  $w_\alpha(x)\neq w_\alpha(y)$.  Then, the largest weight will decrease and the lowest weight will increase as a function of $\alpha \in [0,1]$  according to \eqref{maxprobab} and \eqref{minprobab}, respectively. 
\end{proof}

\vspace{-0.1cm}



\begin{remark} (Special Case)
\label{ex1}
Before deriving the general coding algorithm, consider the simplest case when $|\mathcal{U}|=1$, that is $w_\alpha(x_{|{\cal X}|})<w_\alpha(x_{|{\cal X}|-1})$. Then,
\begin{align*}
 {\mathbb L}^{MO}_\alpha({\bf l}, {\bf p}) 
 &=\Big(\alpha+(1-\alpha) p(x_{|{\cal X}|}) \Big) l^*+ \sum_{x \notin \mathcal{U} }(1-\alpha) p(x) l(x) .
\end{align*}
\noi In this case, the weights are given by $w_{\alpha}(x) = (1-\alpha)p(x),~x \notin \mathcal{U}$ and $w_{\alpha}(x_{|{\cal X}|}) = \alpha +(1-\alpha)p(x_{|{\cal X}|}) $. This formulation is identical to the minimum expected length problem provided $\alpha \in [0,1]$ is such that  $w_\alpha(x_{|{\cal X}|}) < w_\alpha(x_{|{\cal X}|-1})$. Hence, for any $\alpha \in [0,\alpha_{1})$ defined by
\begin{align}\label{alpha1}
\alpha_{1} \triangleq  \frac{p_{|{\cal X}|-1}-p_{|{\cal X}|}}{1+p_{|{\cal X}|-1}-p_{|{\cal X}|}}
\end{align}
the codeword lengths are given by $-\log w_\alpha(x), x \in {\cal X}$. For $\alpha \geq \alpha_{1}$ the form of the minimization problem changes, as more weights $w_{\alpha}(x)$ are such that $x \in \mathcal{U}$. The merging rule on the weight vector ${\bf w}_\alpha$ for any $\alpha \in [0,1]$  so that a solution to the coding problem exists for arbitrary cardinality $|{\cal U}|$ and any $\alpha \in [0,1]$ is described next.
\end{remark}

\noi Consider the general case when $|\mathcal{U}| \in \{1, 2, \ldots, |{\cal X}|-1\}$. Define $\alpha_0 \tri 0$ and
\vspace{-0.1cm}
\begin{align*}\label{ak}
\alpha_k &\tri \min\left\{\alpha \in [0,1]: w_\alpha(x_{|{\cal X}|-(k-1)})= w_\alpha(x_{|{\cal X}|-k})\right\}, \\
\Delta \alpha_k &\tri \alpha_{k+1}-\alpha_k, \hst k \in \{1,\ldots, |{\cal X}|-1\}.
\end{align*}
That is, since the weights are ordered as in Lemma \ref{lin}, $\alpha_1$ is the smallest value of $\alpha \in [0,1]$ for which the smallest two weights are equal, $w_\alpha(x_{|{\cal X}|})= w_\alpha(x_{|{\cal X}|-1})$, $\alpha_2$ is the smallest value of $\alpha \in [0,1]$ for which the next smallest two weights are equal, $w_\alpha(x_{|{\cal X}|-1})= w_\alpha(x_{|{\cal X}|-2})$, etc, and $\alpha_{|{\cal X}|-1}$ is the smallest value of $\alpha \in [0,1]$ for which the biggest two weights are equal, $w_\alpha(x_{2})= w_\alpha(x_{1})$. For a given value of $\alpha \in [0,1]$, we define the minimum weight corresponding to a specific symbol in ${\cal X}$  by $w_{\alpha}^*  \tri \min_{ x \in {\cal X}} w_{\alpha}(x)$. \\
\noi Since for $k=0$, $w_{\alpha_0}(x)=w_{0}(x)=p(x), \forall x  \in {\cal X}$, is the set of initial symbol probabilities, let ${\cal U}_0$ denote the singleton set $\{x_{|{\cal X}|} \}$. Specifically,
\vspace{-0.2cm}
\begin{align}
\mathcal{U}_0 \tri \left\{x\in \{x_{|{\cal X}|} \}:  p^{*}\tri \min_{ x\in {\cal X}} p(x)= p(x_{|{\cal X}|})  \right\} .
\end{align}
Similarly, ${\cal U}_1$ is defined as the set of symbols in $\{x_{|{\cal X}|-1}, x_{|{\cal X}|}\}$ whose weight evaluated at $\alpha_1$ is equal to the minimum weight  $w_{\alpha_1}^*$, i.e., 
\bea
\mathcal{U}_1 \tri \Big\{ x \in \{ x_{|{\cal X}|-1}, x_{|{\cal X}|} \}:  w_{\alpha_1}(x)= w_{\alpha_1}^* \Big\}.
\eea
\noi In general, for  a given value of $\alpha_k, k \in \{1,\ldots, |{\cal X}|-1\}$, we define
\begin{align}\label{Uk}
\mathcal{U}_k  \tri \Big\{ x \in \{ x_{|{\cal X}|-k},\ldots, x_{|{\cal X}|} \}:  w_{\alpha_k}(x)= w_{\alpha_k}^* \Big\}.
\end{align}

\begin{lemma}\label{prop1}
Consider pay-off ${\mathbb L}^{MO}_\alpha({\bf l}, {\bf p})$. For any probability distribution ${\bf p} \in {\mathbb P}({\cal X})$ and  $\alpha \in [\alpha_k, \alpha_{k+1}) \subset [0,1]$, $k \in \{0, 1, 2, \ldots, |{\cal X}|-1\}$ then
\begin{align}
w_\alpha(x_{ |{\cal X}|-k}) = w_{\alpha}(x_{|{\cal X}|})=w_{\alpha}^*
\end{align}
and the cardinality of set $\mathcal{U}_k$ is $\left| \mathcal{U}_k \right|=k+1$.
\end{lemma}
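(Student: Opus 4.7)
The plan is to prove the statement by induction on $k$, leveraging the monotonicity behavior of the weights established in Lemma~\ref{lin} together with the continuity of $w_\alpha(\cdot)$ in $\alpha$ and the definitions of the thresholds $\alpha_k$.

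For the base case $k=0$, at $\alpha_0 = 0$ we have $w_0(x) = p(x)$ and $p(x_{|\mathcal{X}|}) = p^*$ by the ordering on ${\mathbb P}(\mathcal{X})$, so $\mathcal{U}_0 = \{x_{|\mathcal{X}|}\}$ has cardinality $1$. I would then argue that for any $\alpha \in [0, \alpha_1)$, by the definition of $\alpha_1$ as the smallest $\alpha$ at which $w_\alpha(x_{|\mathcal{X}|})=w_\alpha(x_{|\mathcal{X}|-1})$ and by the monotonicity stated in Lemma~\ref{lin} (the unique minimum-weight symbol stays isolated until another weight catches up to it), the set $\mathcal{U}$ remains $\{x_{|\mathcal{X}|}\}$, so $w_\alpha(x_{|\mathcal{X}|}) = w_\alpha^*$.

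For the inductive step, assume the conclusion holds for $k-1$, so that on $[\alpha_{k-1}, \alpha_k)$ the minimum weight is attained exactly on $\mathcal{U}_{k-1} = \{x_{|\mathcal{X}|-(k-1)},\ldots,x_{|\mathcal{X}|}\}$ with cardinality $k$. By continuity of $w_\alpha(\cdot)$ and the definition of $\alpha_k$, at $\alpha = \alpha_k$ the weight $w_{\alpha_k}(x_{|\mathcal{X}|-k})$ reaches $w_{\alpha_k}^*$ and joins the merged group, so
\begin{equation*}
w_{\alpha_k}(x_{|\mathcal{X}|-k}) = w_{\alpha_k}(x_{|\mathcal{X}|-(k-1)}) = \cdots = w_{\alpha_k}(x_{|\mathcal{X}|}) = w_{\alpha_k}^*,
\end{equation*}
giving $|\mathcal{U}_k| = k+1$. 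To extend this equality throughout $[\alpha_k, \alpha_{k+1})$, I would substitute $\mathcal{U} = \mathcal{U}_k$ into the merging formulas preceding the lemma, which give the single common value $w_\alpha(x) = \frac{1}{k+1}\bigl(\alpha + (1-\alpha)\sum_{y \in \mathcal{U}_k} p(y)\bigr)$ for every $x \in \mathcal{U}_k$, and $w_\alpha(y) = (1-\alpha)p(y)$ for $y \notin \mathcal{U}_k$. By Lemma~\ref{lin}, the common value in $\mathcal{U}_k$ is strictly increasing in $\alpha$ while each external weight strictly decreases; the minimal external weight is $w_\alpha(x_{|\mathcal{X}|-(k+1)})$, which, by the definition of $\alpha_{k+1}$ as the smallest $\alpha$ at which this external weight meets the merged group, remains strictly larger than $w_\alpha^*$ on $[\alpha_k, \alpha_{k+1})$. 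Hence $\mathcal{U}_k$ is indeed the full arg-min set on this interval.

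The main obstacle is the bookkeeping at the transition point $\alpha_k$: one must verify that the identity of the newly absorbed symbol is exactly $x_{|\mathcal{X}|-k}$ rather than some other external symbol. This follows because the external weights $w_\alpha(y) = (1-\alpha)p(y)$ remain ordered according to $p$, so the smallest external weight is always $w_\alpha(x_{|\mathcal{X}|-k})$ during the interval $[\alpha_{k-1},\alpha_k)$, and therefore it is this symbol that first collides with $w_\alpha^*$. Together with the continuity of the weights in $\alpha$ and the minimality property in the definition of $\alpha_k$, this pins down the inductive structure and yields both claims of the lemma.
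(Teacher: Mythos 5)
Your proof is correct and follows essentially the same route as the paper's: an induction on $k$ that uses the monotonicity of the merged and unmerged weights (the derivative signs from Lemma~\ref{lin}) together with the definition of $\alpha_{k+1}$ as the next collision point to show the arg-min set is stable on each interval. The paper phrases the inductive step as a contradiction (if the merged weights separated, the derivative signs would force them back together), whereas you argue directly from the explicit weight formulas and the preserved ordering of the external weights; your version is somewhat cleaner, but the underlying idea is the same.
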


\begin{proof}
The validity of the statement  is shown by perfect induction. At $\alpha=\alpha_{1}$,
$w_\alpha(x_{|{\cal X}|})=w_\alpha(x_{|{\cal X}|-1}) \leq w_\alpha(x_{|{\cal X}|-2}) \leq \ldots \leq w_\alpha(x_{1})$.
Suppose that, when $\alpha=\alpha_1+d\alpha$, $d\alpha>0$, then $w_\alpha(x_{|{\cal X}|}) \neq w_\alpha(x_{|{\cal X}|-1}) $. Then,
\begin{align}
 {\mathbb L}_\alpha^{MO}({\bf l}, {\bf p}) =\Big(\alpha+(1-\alpha)p(y) \Big) l^*+ \sum_{x \notin \mathcal{U} }(1-\alpha) p(x) l(x) \nonumber
\end{align}
and the weights will be of the form $w_{\alpha}(x) = (1-\alpha)p(x)$ and $w_{\alpha}(y) = \alpha +(1-\alpha)p(y)$ where $y\in\{x_{|{\cal X}|},x_{|{\cal X}|-1} \}$.
Thus,
\begin{align}
\frac{\partial w_{\alpha}(x)}{\partial \alpha}&=-p(x)<0, ~x \notin \mathcal{U} \\
\frac{\partial  w_{\alpha}(y)}{\partial \alpha}&=1-p(y) >0, ~y \in \mathcal{U} \label{minprob}.
\end{align}
Hence, the largest of the two would decrease, while the smallest would increase and therefore they meet again. This contradicts our assumption that  $w_\alpha(x_{|{\cal X}|}) \neq w_\alpha(x_{|{\cal X}|-1}) $ for $\alpha>\alpha_1$. Therefore, $w_\alpha(x_{|{\cal X}|}) = w_\alpha(x_{|{\cal X}|-1}), ~\forall \alpha \in [\alpha_1,1)$.

Secondly, in the case that $\alpha>\alpha_k,, ~k \in \{2,\ldots, {|{\cal X}|}-1\}$, we suppose that the weights 
$w_\alpha(x_{|{\cal X}|})=w_\alpha(x_{|{\cal X}|-1})\ldots= \ldots =w_\alpha(x_{|{\cal X}|-k})=w_{\alpha}^*$. Hence, the pay-off is written as
\begin{align*}
 {\mathbb L}_\alpha^{MO}({\bf l}, {\bf p}) =\Big(\alpha+(1-\alpha)\sum_{x \in \mathcal{U} }p(x) \Big) l^*+ \sum_{x \notin \mathcal{U} }(1-\alpha) p(x) l(x)
\end{align*}

\noi Thus,
\begin{align}
\frac{\partial w_{\alpha}(x)}{\partial \alpha}&=-p(x)<0, ~x \notin \mathcal{U} \label{wi_prob} \\
|\mathcal{U} |\frac{\partial w_{\alpha}^*}{\partial \alpha}&=1-\sum_{j=0}^{k}p_{|{\cal X}|-j} >0, ~k \in \{2,\ldots, |{\cal X}|-1\}  .
\end{align}
Finally, in the case that $\alpha>\alpha_{k+1}, ~k \in \{2,\ldots, |{\cal X}|-2\}$, if any of the weights $w_{|{\cal X}|-j}{(\alpha)},~\forall j \in \{ 0, \ldots , k+1 \} $,  changes differently than another, then, either at least one probability will become smaller than others and give a higher codeword length, or it will increase faster than the others and hence according to \eqref{wi_prob} it will decrease to meet the other weights. 
Therefore, the change in this new set of probabilities should be the same, and the cardinality of $\mathcal{U}$ increases by one, i.e., $\mathcal{U}_{k+1}=\left|k+2\right|, ~k\in \{ 2,\ldots |{\cal X}|-2 \}$.
\end{proof}

\noi The main theorem which describes how the weight vector ${\bf w}_\alpha$ changes as a function of $\alpha \in [0,1]$ so that there exist a solution to the coding problem is given in the next theorem.

\begin{theorem}\label{main_theorem}
Consider pay-off ${\mathbb L}^{MO}_\alpha({\bf l}, {\bf p})$. Given a set of probabilities ${\bf p} \in   {\mathbb P}({\cal X})$ and $\alpha \in [\alpha_k,\alpha_{k+1})$, $k\in\{0,1,\ldots,|{\cal X}|-1\}$, the optimal weights ${\bf{w}^{\dagger}_{\alpha}} \tri \{{w}^{\dagger}_{\alpha}(x): x \in {\cal X}  \} \equiv \big({w}^{\dagger}_{\alpha}(x_1), {w}^{\dagger}_{\alpha}(x_2), \ldots, {w}^{\dagger}_{\alpha}(x_{|{\cal X}|})\big)$ are given by
\begin{align}
{\small
w^{\dagger}_{\alpha}(x) =
\begin{cases}
(1-\alpha)p(x),~ x \notin \mathcal{U}_{k}  \\ 
\displaystyle w_{\alpha_k}^*(x) + (\alpha-\alpha_k) \frac{\sum_{x \notin \mathcal{U}_{k}} p(x) }{|\mathcal{U}_{k}|}, ~x \in \mathcal{U}_{k}
\end{cases}
\label{weights_update}}
\end{align}
where $\mathcal{U}_k$ is given by \eqref{Uk} and
\begin{align}\label{akplus1}
\alpha_{k+1}=\alpha_k+(1-\alpha_k)\frac{(p_{|{\cal X}|-(k+1)}-p_{|{\cal X}|-k})}{\frac{\sum_{x \notin \mathcal{U}_{k}} p(x)}{|\mathcal{U}_{k}|}+p_{|{\cal X}|-(k+1)}} .
\end{align}
\end{theorem}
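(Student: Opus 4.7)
The plan is to proceed by induction on $k$, using Lemma~\ref{prop1} as the structural backbone: for every $\alpha$ in the half-open interval $[\alpha_k,\alpha_{k+1})$, the set $\mathcal{U}_k$ of symbols attaining the maximum codeword length (equivalently, the minimum weight $w_\alpha^*$) consists of exactly the $k+1$ least probable symbols $\{x_{|{\cal X}|-k},\dots,x_{|{\cal X}|}\}$, and all of them share a common weight. Given this structural fact, I will derive the closed-form expression for $w^\dagger_\alpha(x)$ algebraically and then determine $\alpha_{k+1}$ as the crossing point at which the next symbol $x_{|{\cal X}|-(k+1)}$ joins $\mathcal{U}_k$.

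\textbf{Step 1 (weights outside $\mathcal{U}_k$).} For $x\notin \mathcal{U}_k$, the symbol does not attain the maximum, so by the definition of the weights immediately following \eqref{opt:new} one has $w^\dagger_\alpha(x)=(1-\alpha)p(x)$ throughout $[\alpha_k,\alpha_{k+1})$. This matches the first branch of \eqref{weights_update}.

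\textbf{Step 2 (weights inside $\mathcal{U}_k$).} The total weight carried by $\mathcal{U}_k$ equals $\alpha+(1-\alpha)\sum_{x\in\mathcal{U}_k}p(x)$, and since by Lemma~\ref{prop1} all $|\mathcal{U}_k|=k+1$ elements share the same value $w_\alpha^*$, each weight equals this total divided by $|\mathcal{U}_k|$. Differentiating with respect to $\alpha$ gives
\begin{equation*}
\frac{\partial w_\alpha^*}{\partial \alpha}=\frac{1-\sum_{x\in\mathcal{U}_k}p(x)}{|\mathcal{U}_k|}=\frac{\sum_{x\notin\mathcal{U}_k}p(x)}{|\mathcal{U}_k|},
\end{equation*}
which is constant in $\alpha$ on the interval $[\alpha_k,\alpha_{k+1})$. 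Integrating from $\alpha_k$ up to $\alpha$ and using the initial condition $w^\dagger_{\alpha_k}(x)=w^*_{\alpha_k}(x)$ (guaranteed by continuity of $w_\alpha(\cdot)$ at $\alpha_k$, as established in the proof of Lemma~\ref{lin}) yields exactly the second branch of \eqref{weights_update}.

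\textbf{Step 3 (determining $\alpha_{k+1}$).} By definition $\alpha_{k+1}$ is the smallest $\alpha$ at which the next candidate symbol, $x_{|{\cal X}|-(k+1)}$, catches the minimum weight, i.e.\ at which the linearly decreasing function $(1-\alpha)p_{|{\cal X}|-(k+1)}$ meets the linearly increasing function $w^\dagger_\alpha(x)$ for $x\in\mathcal{U}_k$. Using $w^*_{\alpha_k}=(1-\alpha_k)p_{|{\cal X}|-k}$ (the weight of the symbol that joined $\mathcal{U}_k$ at the previous crossing), the equation
\begin{equation*}
(1-\alpha_{k+1})p_{|{\cal X}|-(k+1)} = (1-\alpha_k)p_{|{\cal X}|-k}+(\alpha_{k+1}-\alpha_k)\frac{\sum_{x\notin\mathcal{U}_k}p(x)}{|\mathcal{U}_k|}
\end{equation*}
is linear in $\alpha_{k+1}$ and, upon collecting terms, simplifies to the expression \eqref{akplus1}.

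The main obstacle I anticipate is Step~2, specifically making sure that the inductive base ($k=0$, where $w^*_{\alpha_0}=p(x_{|{\cal X}|})$) is compatible with the recursion and that the continuity $w^\dagger_{\alpha_k^{-}}=w^\dagger_{\alpha_k^{+}}$ at the boundary between two consecutive regimes is rigorously justified; this is where the careful description of $\mathcal{U}_k$ in Lemma~\ref{prop1} is essential, since without it one cannot claim that the slope of $w_\alpha^*$ is piecewise constant with jumps only at the $\alpha_k$. Once this is in place, the remaining algebra collapses to solving a single linear equation in Step~3.
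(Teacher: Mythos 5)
Your proposal is correct and follows essentially the same route as the paper's own proof: invoke Lemma~\ref{prop1} for the structure of $\mathcal{U}_k$, differentiate the aggregate weight $\alpha+(1-\alpha)\sum_{x\in\mathcal{U}_k}p(x)$ to obtain the constant slope $\sum_{x\notin\mathcal{U}_k}p(x)/|\mathcal{U}_k|$, integrate from $\alpha_k$, and solve the linear crossing equation for $\alpha_{k+1}$. Your Step 3 is in fact slightly more explicit than the paper's ``after manipulation,'' since you spell out the substitution $w^*_{\alpha_k}=(1-\alpha_k)p_{|{\cal X}|-k}$ needed to reach \eqref{akplus1}.
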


{\begin{proof}
According to Lemma \ref{prop1}, the lowest probabilities become equal and change together forming a total weight given by
\begin{align*}
\sum_{j=0}^{k}w_{\alpha}(x_{|{\cal X}|-j})&=|\mathcal{U}_{k}| w_{\alpha}^*(x) \\
&= \alpha+(1-\alpha)p_{|{\cal X}|}+\ldots+(1-\alpha)p_{|{\cal X}|-k}.
\end{align*}
\noi Hence,
\begin{align}
|\mathcal{U}_{k}|\frac{w_{\alpha}^*(x)}{\partial \alpha}&=1-\sum_{j=0}^{k}p(x_{|{\cal X}|-j}) \\
\frac{w_\alpha^*(x)}{\partial \alpha}&=\frac{1-\sum_{j=0}^{k}p(x_{|{\cal X}|-j})}{|\mathcal{U}_{k}|}={\frac{\sum_{x \notin \mathcal{U}_{k}} p(x)}{|\mathcal{U}_{k}|}}.
\end{align}
By letting, $\delta_k(\alpha)=\alpha - \alpha_{k}$, then $\forall~ x \in {\mathcal{U}_{k}}$
\begin{align}\label{eq:sol1}
w_\alpha^*(x) =w_{\alpha_k}^*(x) +\delta_k(\alpha){\frac{\sum_{x \notin \mathcal{U}_{k}} p(x)}{|\mathcal{U}_{k}|}},
\end{align}
\noi whereas  $\forall~ x \notin {\mathcal{U}_{k}}$, $w_{\alpha}(x) =(1-\alpha)p(x)$.
When $\delta_k(\alpha)=\alpha_{k+1}-\alpha_k$, that is $\alpha=\alpha_{k+1}$, then $w_\alpha(x_{|{\cal X}|-(k+1)})=w_{\alpha}^*(x)$ and therefore,
\begin{align*}
&\left(1- \alpha_{k+1} \right) p(x_{|{\cal X}|-(k+1)}) = w_{\alpha_k}^*(x) +\delta_k(\alpha){\frac{\sum_{x \notin \mathcal{U}_{k}} p(x)}{|\mathcal{U}_{k}|}} 
\end{align*}
and thus, after manipulation $\alpha_{k+1}$ is given by
\begin{align}\label{alpha_k}
\alpha_{k+1}=\alpha_k+(1-\alpha_k)\frac{p(x_{|{\cal X}|-(k+1)})-p(x_{|{\cal X}|-k})}{\frac{\sum_{x \notin \mathcal{U}_{k}} p(x)}{|\mathcal{U}_{k}|}+p_{|{\cal X}|-(k+1)}} .
\end{align}
\end{proof}

\section{Optimal Code Lengths}\label{subsec:shannon}

\noi This section presents the optimal real-valued codeword length   vectors ${\bf l} \in {\cal L}({\mathbb R}_+^{|{\cal X}|})$ of the multiobjective pay-offs stated under   Problem~\ref{problem1} and Problem~\ref{problem2}, for any $\alpha \in [0,1]$ and $t \in [0,\infty)$.  \\


 \begin{theorem}
 \label{mos}
Consider Problem~\ref{problem1}.   For any probability distribution ${\bf p} \in {\mathbb P}({\cal X})$ and $\alpha \in [0,1]$ the optimal prefix real-valued code ${\bf l} \in {\mathbb {{R}}}_+^{|{\cal X}|}$ minimizing the pay-off ${\mathbb L}^{MO}_{\alpha}({\bf l}, {\bf p})$ is given by
\bes
l_{\alpha}^\dagger(x) = \left\{ \begin{array}{lll}
 -\log{\Big((1-\alpha)p(x)\Big)} ~\mbox{for} \: x \notin {\cal U}_k  \\
 -\log{\Big( \frac{\alpha+(1-\alpha)\sum_{x \in \mathcal{U}_k}p(x)}{  |\mathcal{U}_k|} \Big)} ~ \mbox{for} \: x \in {\cal U}_k \end{array} \right.
\ees
where  $\alpha \in [\alpha_k, \alpha_{k+1}) \subset [0,1] ,~\forall k \in \{1, \ldots, |{\cal X}|-1\}$.
 \end{theorem}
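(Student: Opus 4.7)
The plan is to reduce Problem~\ref{problem1} to a classical Shannon coding problem with a modified probability distribution and then invoke the standard real-valued Shannon code result. First, fixing $\alpha \in [\alpha_k,\alpha_{k+1})$ so that by Theorem~\ref{main_theorem} the set of symbols achieving the maximum length is $\mathcal{U}_k$ with $|\mathcal{U}_k|=k+1$, I would use the identity \eqref{opt:new} to write ${\mathbb L}^{MO}_\alpha({\bf l},{\bf p})=\sum_{x\in{\cal X}}w^{\dagger}_\alpha(x)l(x)$. A quick calculation $\sum_{x}w^{\dagger}_\alpha(x)=\alpha+(1-\alpha)\sum_{x\in{\cal X}}p(x)=1$ shows that ${\bf w}^{\dagger}_\alpha$ is itself a probability distribution on ${\cal X}$.

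Next, I would invoke the classical Shannon-coding result (equivalently, Gibbs' inequality): over real-valued $l\in{\cal L}({\mathbb R}_+^{|{\cal X}|})$ satisfying the Kraft inequality, the linear functional $\sum_x q(x)l(x)$ is minimized by $l^\dagger(x)=-\log q(x)$ for any probability distribution $q$, with the Kraft inequality met with equality. Applying this with $q={\bf w}^{\dagger}_\alpha$ yields $l_\alpha^\dagger(x)=-\log w^{\dagger}_\alpha(x)$.

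It then remains to substitute the two cases of \eqref{weights_update}. For $x\notin\mathcal{U}_k$ the formula is immediate: $l_\alpha^\dagger(x)=-\log\bigl((1-\alpha)p(x)\bigr)$. For $x\in\mathcal{U}_k$, I would simplify $w_{\alpha_k}^*(x)+(\alpha-\alpha_k)\sum_{y\notin\mathcal{U}_k}p(y)/|\mathcal{U}_k|$ by substituting $w_{\alpha_k}^*(x)=\bigl(\alpha_k+(1-\alpha_k)\sum_{y\in\mathcal{U}_k}p(y)\bigr)/|\mathcal{U}_k|$ and using $\sum_{y\notin\mathcal{U}_k}p(y)=1-\sum_{y\in\mathcal{U}_k}p(y)$; after cancellation the expression collapses to $\bigl(\alpha+(1-\alpha)\sum_{y\in\mathcal{U}_k}p(y)\bigr)/|\mathcal{U}_k|$, giving the stated closed form.

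The main obstacle I anticipate is the consistency check: the lengths produced by the unconstrained Shannon minimization must be compatible with the merging structure dictated by Theorem~\ref{main_theorem}. Specifically, $l_\alpha^\dagger(x)$ must take a common value on $\mathcal{U}_k$ (automatic, since $w^{\dagger}_\alpha$ is constant on $\mathcal{U}_k$ by construction) and strictly exceed $l_\alpha^\dagger(y)$ for $y\notin\mathcal{U}_k$, which follows from Lemma~\ref{lin} because on $[\alpha_k,\alpha_{k+1})$ weights outside $\mathcal{U}_k$ remain strictly larger than those inside. Hence the unconstrained Shannon solution automatically respects the maximum-length structure implicit in the pay-off, closing the argument.
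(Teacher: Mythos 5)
Your proposal is correct and follows essentially the same route as the paper's own proof: rewrite the pay-off via \eqref{opt:new} as $\sum_x w_\alpha(x)l(x)$, verify that ${\bf w}_\alpha$ is a probability distribution, and conclude that the Shannon-code lengths $-\log w_\alpha^\dagger(x)$ are optimal. Your added algebra collapsing \eqref{weights_update} to the stated closed form, and the consistency check that the resulting lengths are maximal exactly on $\mathcal{U}_k$ (so the rewriting of the $\max$ term is self-consistent), are details the paper leaves implicit but they do not change the approach.
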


\begin{proof}
The pay-off to be minimized is given by \eqref{opt:new}. It can be easily verified that the new weight vector ${\bf w}_\alpha \tri \{w_\alpha(x): x \in {\cal X}\}$ is a probability distribution since $0 \leq w_{\alpha}(x) \leq 1,~\forall x \in {\cal X}$ and $\sum_{ x \in {\cal X} }w_{\alpha}(x) =1,  \forall \alpha \in [0,1]$. Therefore, as in Shannon coding the optimal codeword lengths are given by minus the logarithm of the optimal weights.
\end{proof}

\noi Note that for $\alpha =0$ Theorem~\ref{mos} corresponds to the Shannon solution $l^{sh}(x)=-\log p(x)$, while the solution for $\alpha =1$ is the same as the solution for all $\alpha$ taking values in interval $\alpha \in [\alpha_{|{\cal X}|-1}, 1]$ over which the weight vector ${\bf w}_\alpha$ is identically distributed, and hence $l_\alpha^\dagger(x)|_{\alpha =1}= \frac{1}{|{\cal X}|}$. The behavior of $w_\alpha(x)$ and  $l_\alpha^\dagger(x)$ as a function of $\alpha \in [0,1]$ is described in the next subsection via an illustrative example. The solution of the multiobjective pay-off  ${\mathbb L}{\mathbb R}_{\alpha}({\bf l}+ \log {\bf p}, {\bf p})$ which involves the pointwise redundancy is omitted since it is characterized similarly.

 \begin{theorem}
 \label{mos1}
 \noi Consider Problem~\ref{problem2}. For any probability distribution ${\bf p} \in {\mathbb P}({\cal X})$ and $\alpha \in [0,1]$ the optimal prefix real-valued code ${\bf l} \in {\mathbb R}_+^{|{\cal X}|}$ minimizing the pay-off ${\mathbb L}_{t,\alpha}^{MO}({\bf l}, {\bf p})$ is given by
\bea
l_{t,\alpha}^\dagger(x) = -\log \Big( \alpha \nu_t(x) + (1-\alpha) p(x)\Big), \hst \ x \in {\cal X} \label{gsl}
\eea
where $\{\nu_{t,\alpha}(x): x \in {\cal X}\}$ is defined via the tilted probability distribution
\bea
\nu_{t,\alpha}(x) \tri \frac{D^{ t \: l_{t,\alpha}^\dagger(x)} p(x)}{ \sum_{ x \in {\cal X}} p(x) D^{ t \: l_{t,\alpha}^\dagger(x)} }, \hst   x \in {\cal X}  \label{gs11}
\eea
 \end{theorem}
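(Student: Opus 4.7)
The plan is to treat this as a constrained convex optimization problem in $\mathbf{l} \in \mathbb{R}_+^{|\mathcal{X}|}$, where the constraint is the Kraft inequality $\sum_{x} D^{-l(x)} \leq 1$. First I would observe that the pay-off ${\mathbb L}_{t,\alpha}^{MO}(\mathbf{l},\mathbf{p})$ is convex in $\mathbf{l}$ for $t>0$: the term $\frac{1}{t}\log(\sum_x p(x) D^{tl(x)})$ is the classical log-sum-exp (Campbell) functional, which is convex, and the second term is linear. The feasible set defined by Kraft is convex, so KKT conditions are both necessary and sufficient for a global minimum. Moreover, at the optimum the Kraft inequality is active (tight), since lowering any $l(x)$ strictly decreases both terms in the pay-off.

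Next I would form the Lagrangian
\begin{equation*}
\mathcal{L}(\mathbf{l},\lambda) = \frac{\alpha}{t}\log\Big(\sum_{x\in\mathcal{X}} p(x) D^{tl(x)}\Big) + (1-\alpha)\sum_{x\in\mathcal{X}} l(x)p(x) + \lambda\Big(\sum_{x\in\mathcal{X}} D^{-l(x)} - 1\Big),
\end{equation*}
and differentiate with respect to $l(x)$. Using $\frac{d}{dl(x)} D^{tl(x)} = t\ln D \cdot D^{tl(x)}$ and $\frac{d}{dl(x)} \log(\cdot) = (\ln D)^{-1}\frac{d(\cdot)}{\cdot \, dl(x)}$, the stationarity condition becomes
\begin{equation*}
\alpha\, \frac{p(x) D^{tl(x)}}{\sum_{y} p(y) D^{tl(y)}} + (1-\alpha)p(x) - \lambda (\ln D)\, D^{-l(x)} = 0.
\end{equation*}
Recognizing the first ratio as $\nu_{t,\alpha}(x)$ from \eqref{gs11} evaluated at the optimum $\mathbf{l}^\dagger$, this rearranges to
\begin{equation*}
D^{-l^\dagger_{t,\alpha}(x)} = \frac{\alpha\,\nu_{t,\alpha}(x) + (1-\alpha)p(x)}{\lambda\ln D}.
\end{equation*}

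Then I would determine the multiplier by summing both sides over $x\in\mathcal{X}$ and invoking the tight Kraft identity: since $\sum_x \nu_{t,\alpha}(x) = 1$ and $\sum_x p(x) = 1$, the numerator sums to $\alpha + (1-\alpha) = 1$, forcing $\lambda\ln D = 1$. Substituting back yields the claimed closed form
\begin{equation*}
l^\dagger_{t,\alpha}(x) = -\log\!\Big(\alpha\,\nu_{t,\alpha}(x) + (1-\alpha)p(x)\Big), \quad x \in \mathcal{X}.
\end{equation*}
Finally I would note that ${w}_{t,\alpha}(x) \triangleq \alpha\,\nu_{t,\alpha}(x) + (1-\alpha)p(x)$ is itself a probability distribution, which confirms that the derived lengths satisfy Kraft with equality and constitute a valid real-valued prefix code.

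The main obstacle is that the expression is \emph{implicit}: $\nu_{t,\alpha}(x)$ depends on $l^\dagger_{t,\alpha}$ via \eqref{gs11}, so the theorem really asserts existence of a fixed point of the coupled system. To address this cleanly, I would emphasize that the Lagrangian analysis characterizes any stationary point by the stated relation, and that convexity of the pay-off together with compactness of the feasible Kraft simplex (after bounding lengths from below using an arbitrary reference codeword) guarantees that a minimizer exists and is unique, so the fixed-point system admits a solution that is precisely the optimum. The minor technical points are the sign of $t$ (for $t<0$ the log-sum term becomes concave in $\mathbf{l}$ but the statement is restricted to $t\in[0,\infty)$ in Section III) and justifying interchange of the derivative and the logarithm, both of which are routine.
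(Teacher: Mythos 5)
Your proposal is correct and follows essentially the same route as the paper, which likewise invokes the Karush--Kuhn--Tucker conditions on the Kraft-constrained Lagrangian to obtain $D^{-l_{t,\alpha}^\dagger(x)}=\alpha \nu_{t,\alpha}(x) + (1-\alpha )p(x)$. Your version is simply more explicit (working out the multiplier via the tight Kraft identity and flagging the implicit fixed-point nature of $\nu_{t,\alpha}$), details the paper's two-line proof omits.
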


\begin{proof}
By invoking the  Karush-Kuhn-Tucker necessary and sufficient conditions  of  optimality one obtains the following set of equations  describing the optimal codeword lengths.
\bea
D^{-l_{t,\alpha}^\dagger(x)}=\alpha \nu_{t,\alpha}(x) + (1-\alpha )p(x), x \in {\cal X} \label{gs}
\eea
which gives (\ref{gsl}).
\end{proof}

\noi Note that the solution stated under Theorem~\ref{mos1} corresponds, for $\alpha=0$ to the Shannon code, which minimizes the average codeword length pay-off, while for $\alpha=1$ (after manipulations) it is given by
\begin{align}
 l_{t,\alpha=1}^{\dagger}(x) = - \frac{1}{1+t} \log p(x)  + \log \Big( \sum_{x \in {\cal X}} p(x)^\frac{1}{1+t}   \Big), \hso x \in {\cal X}  \nonumber
\end{align}
which is precisely the solution of a variant of the Shannon code,  minimizing the average of an exponential function of the codeword length pay-off \cite{1981:humblet_generalization,2008b:Baer}.
\noi The solution of the multiobjective Payoff ${\mathbb L}_{t,\alpha}^{MO}({\bf l}+\log_D {\bf p}, {\bf p})$ corresponding to pointwise redundancy is obtained similarly as in Theorem~\ref{mos1}. The optimal codeword lengths are given by
\bea
l_{t,\alpha}^\dagger(x) = -\log \Big( \alpha \mu_{t,\alpha}(x) + (1-\alpha) p(x)\Big), \hst \ x \in {\cal X} \label{gsl1}
\eea
where $\{\mu_{t,\alpha}(x): x \in {\cal X}\}$ is defined via the tilted probability distribution
\bea
\mu_{t,\alpha}(x)= \frac{D^{ t \: l_{t,\alpha}^\dagger(x)} p^{t+1}(x)}{ \sum_{ x \in {\cal X}} p^{t+1}(x) D^{ t \: l_{t,\alpha}^\dagger(x)} } \hst  x \in {\cal X}. \label{gs}
\eea
The only difference between the optimal codeword lengths of pay-off ${\mathbb L}_{t,\alpha}^{MO}({\bf l}+\log_D {\bf p}, {\bf p})$ with respect to the pay-off ${\mathbb L}_{t,\alpha}^{MO}({\bf l}, {\bf p})$  is the term $p^{t+1}(x)$ appearing in the tilted distribution. When $\alpha=1$ (\ref{gsl1}) is precisely a a variant of the Shannon code,  minimizing the average of an exponential function of the redundancy of the codeword length pay-off \cite{2008a:Baer,2006a:Baer}.

\begin{remark} 1. {\it The Limiting Case as $t \rar \infty$}: The minimization of the multiobjective pay-off ${\mathbb L}_{\alpha}^{MO}({\bf l}, {\bf p})$ obtained in Theorem~\ref{mos} is indeed obtained from the minimization of the two parameter multiobjective pay-off ${\mathbb L}_{t,\alpha}^{MO}({\bf l}, {\bf p})$ in the limit, as $t \rar \infty$.
In addition, $\lim_{t \rar \infty} {\mathbb L}_{t,\alpha}^{MO}({\bf l}, {\bf p})={\mathbb L}_{\alpha}^{MO}({\bf l}, {\bf p})$, $\forall {\bf l}$ and hence at ${\bf l} ={\bf l}^\dagger$. The point to be made here is that the solution of Problem~\ref{problem1} can be deduced from the solution of Problem~\ref{problem2} in the limit as $t \rar \infty$ provided the merging rule on how the solution changes with $\alpha \in [0,1]$ is employed. \\
2. {\it Coding Theorems}: Although, coding theorems for Problem~\ref{problem1} and Problem~\ref{problem2} are not presented (due to space limitation), these can be easily obtained either from the closed form solutions or by following  \cite{1965:campbell_coding}.
\end{remark}

\subsection{An Algorithm for Computing the Optimal Weights}\label{subsec:algorithm}
\label{fa}
\noi For any probability distribution  ${\bf p} \in {\mathbb P}({\cal X})$ and $\alpha \in [0,1]$ an algorithm is presented to compute the optimal weight vector ${\bf w}_\alpha$ for any $\alpha \in [0,1]$.

\begin{figure}[H]
\centering
\includegraphics[width=\columnwidth]{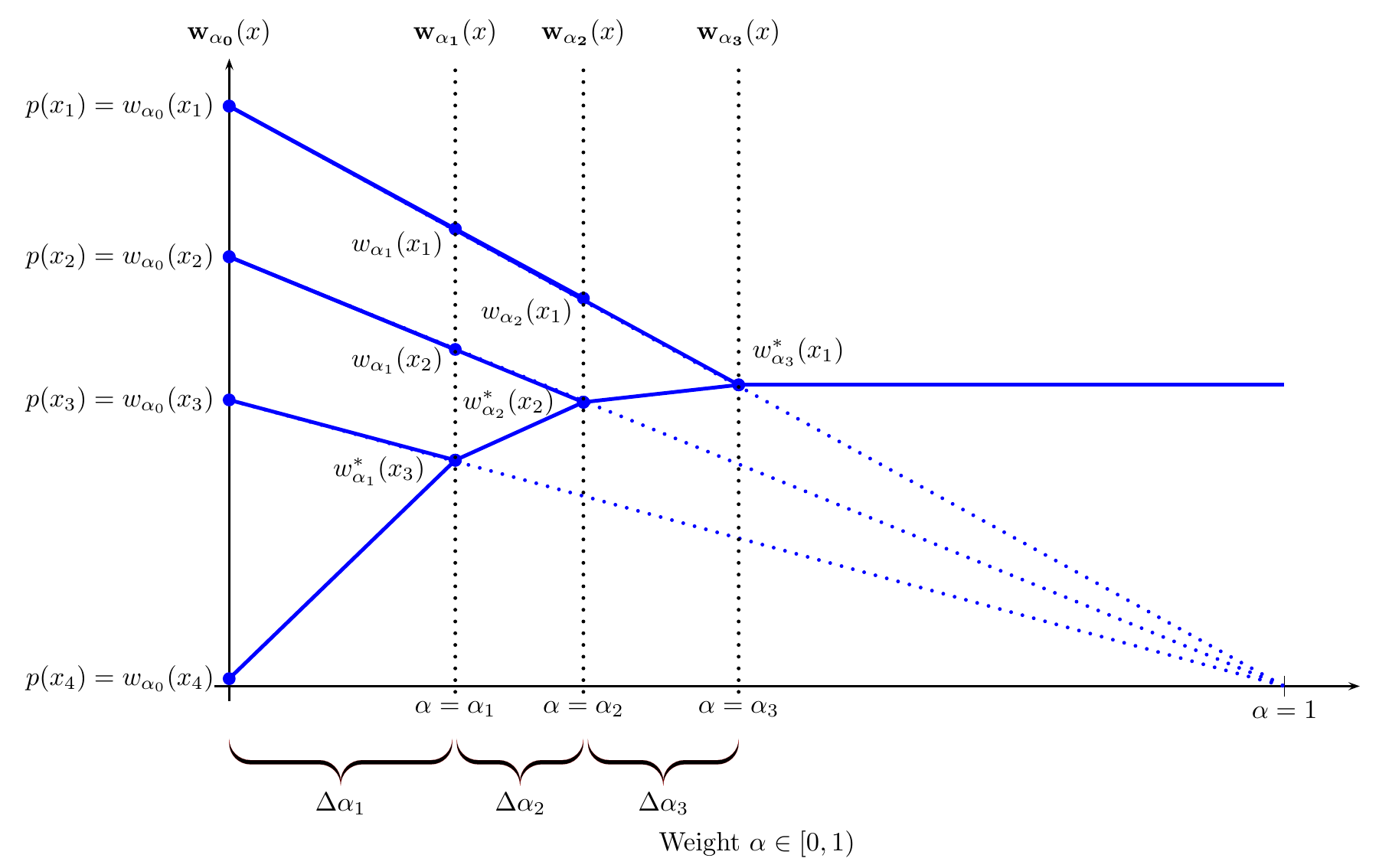}
\caption{A schematic representation of the weights for different values of $\alpha$.}\label{probs1}
\end{figure}

\noi It is shown in Section~\ref{ow} (see also Figure~\ref{probs1}) that the weight vector ${\bf w}_\alpha$ changes piecewise linearly as a function of  $\alpha \in [0,1]$. Therefore, to calculate the weights $w_{\hat{\alpha}}(x)$ for a specific value of $\hat{\alpha} \in [0,1]$, one is only required to determine  the values of $\alpha$ at the intersections by using \eqref{akplus1}, up to the intersection (see Fig.\ref{probs1}) that gives a value greater than $\hat{\alpha}$ or up to the last intersection (if all the intersections give a smaller value of $\alpha$). Thus, one can easily find the weights at $\hat{\alpha}$ by using \eqref{weights_update}. 

\begin{algorithm}
\caption{\small Algorithm for Computing the Weight Vector ${\bf w}_\alpha$ }
\label{charalambous_algorithm}
\begin{algorithmic}
\STATE $\,$\\
\STATE \textbf{initialize}
\STATE $\quad\, \mathbf{p}=\left(p(x_1), p(x_2), \ldots, p(x_|{\cal X}|)\right)^T$, $\alpha = \hat{\alpha}$
\STATE $\quad\, k=0$,  $\alpha_0 = 0$
\WHILE{$\displaystyle \hat{\alpha}>\alpha_{k} $}
\STATE {Calculate $\alpha_{k+1}$:}
\STATE {$\quad\, \displaystyle \alpha_{k+1}= \alpha_{k}+(1-\alpha_k)\frac{p(x_{|{\cal X}|-(k+1)})-p(x_{|{\cal X}|-k})}{\frac{\sum_{x \notin \mathcal{U}_{k}} p(x)}{k+1}+p(x_{|{\cal X}|-(k+1)})}  $}
\STATE {$k \leftarrow k + 1$}
\ENDWHILE
\STATE {$k \leftarrow k - 1$}
\STATE {Calculate $\mathbf{w}^{\dagger}_{\hat{\alpha}}$:}
\FOR{$v = 1$ to $|{\cal X}|-(k+1)$}
\STATE $w^{\dagger}_{\hat{\alpha}}(x_{v})=(1-\hat{\alpha})p(x_{v})$
\STATE $v \leftarrow v + 1$
\ENDFOR
\STATE {Calculate $w^{*}_{\hat{\alpha}}(x) $:}
\STATE $\quad\, \displaystyle w^{*}(\hat{\alpha}) =\left( 1- a_{k} \right)p(x_{|{\cal X}|-k})+ (\hat{\alpha}-\alpha_k) \frac{\sum_{x \notin \mathcal{U}_{k}} p(x) }{k+1} $
\FOR{$v = |{\cal X}|-k$ to $|{\cal X}|$}
\STATE $\displaystyle w^{\dagger}(x_{v})=w^{*}_{\hat{\alpha}}(x)$
\STATE $v \leftarrow v+ 1$
\ENDFOR
\RETURN $\mathbf{w}^{\dagger}_{\hat{\alpha}}$.
\end{algorithmic}
\end{algorithm}

%
%
%
%
\subsection{Illustrative Example}\label{sec:examples}

\noi Consider binary codewords and a source with   $|{\cal X}|=4$ and probability distribution $\displaystyle \mathbf{p}=\left(\begin{array}{cccc}
   \frac{8}{15} &  \frac{4}{15} &  \frac{2}{15} &  \frac{1}{15}
\end{array}\right)$. Using the  algorithm one can find the optimal weight vector $\mathbf{w}^{\dagger}$ for different values of $\alpha \in [0,1]$ for which pay-off (\ref{b30}) of Problem~\ref{problem1} is minimized. Compute  $\alpha_1$ via \eqref{akplus1}, $\alpha_1=1/16$.
For  $\alpha=\alpha_1 =1/16$ the optimal weights are 
\begin{align*}
&w_3^{\dagger}(\alpha)=w_4^{\dagger}(\alpha)=(1-\alpha)p_3=\frac{1}{8} \\
&w_2^{\dagger}(\alpha)=(1-\alpha)p_2=\frac{1}{4} \\
&w_1^{\dagger}(\alpha)=(1-\alpha)p_1=\frac{1}{2}
\end{align*}
 In this case, the resulting codeword lengths correspond to the optimal Huffman code.
The weights for all $\alpha \in [0,1]$ can be calculated iteratively by calculating $\alpha_k$ for all $k\in \{ 0, 1, 2, 3\}$ and noting that the weights vary linearly with $\alpha$ (Figure \ref{probsEx1}).

\begin{figure}[H]
\centering
\includegraphics[width=\columnwidth]{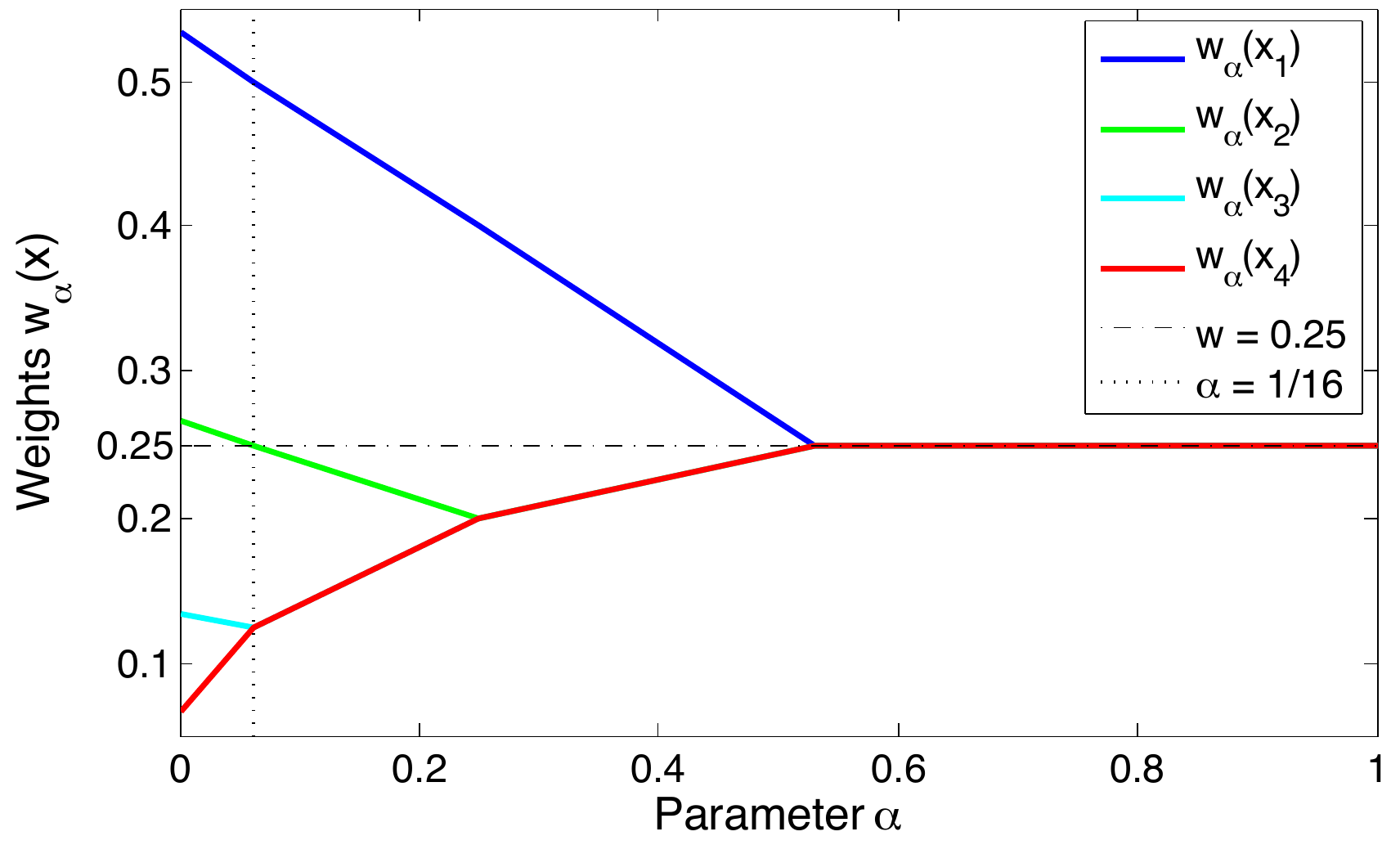}
\caption{A schematic representation of the weights for different values of $\alpha$ when $p=(\frac{8}{15},\frac{4}{15},\frac{2}{15},\frac{1}{15})$.}\label{probsEx1}
\end{figure}

%
%
%
%
\section{Conclusion and Future Directions}\label{sec:conclusions}
Two lossless coding problems with multiobjective pay-offs are investigated and the idealized real-valued codeword length solutions are presented. Relations to problems discussed in the literature are obtained. Based on the insight gained in this paper, Huffman like algorithms which solve this problem are part of ongoing research.

%
%
%
%


\bibliographystyle{IEEEtran}
\bibliography{bibliografia}
%




\end{document}